\providecommand{\U}[1]{\protect\rule{.1in}{.1in}}
\providecommand{\U}[1]{\protect\rule{.1in}{.1in}}
\providecommand{\U}[1]{\protect\rule{.1in}{.1in}}
\newtheorem{theorem}{Theorem}[section]
\newtheorem{lemma}[theorem]{Lemma}
\newtheorem{proposition}[theorem]{Proposition}
\newenvironment{proof}[1][Proof]{\begin{trivlist}
\item[\hskip \labelsep {\bfseries #1}]}{\end{trivlist}}
\newcommand{\qed}{\nobreak \ifvmode \relax \else
\ifdim\lastskip<1.5em \hskip-\lastskip
\hskip1.5em plus0em minus0.5em \fi \nobreak
\vrule height0.75em width0.5em depth0.25em\fi}
\begin{document}

\title{Polymer Expansions for Cycle LDPC Codes}
\author{Nicolas Macris and Marc Vuffray\\LTHC, IC, EPFL,\\CH-1015 Lausanne, Switzerland\\nicolas.macris@epfl.ch, marc.vuffray@epfl.ch}
\maketitle

\begin{abstract}
We prove that the Bethe expression for the conditional input-output entropy of cycle LDPC codes 
on binary symmetric channels above the MAP threshold is exact in the large block length limit.
The analysis relies on methods from statistical physics. The finite size corrections to the Bethe expression
are expressed through a polymer expansion which is controlled thanks to expander and counting arguments.
\end{abstract}





\section{Introduction}

\label{introduction}

A few years ago Cherktov and Chernyak \cite{Cherktov-Chernyak} devised a
\textit{loop series} which represents the partition function
of a general vertex model as the product of
the Bethe mean field expression and a residual partition function over a system of
{\it loops}. In this representation all quantities are entirely expressible in
terms of Belief Propagation (BP) marginals or messages.
However it has not been clear so far if this representation leads to a
\textit{controlled series expansions for the log-partition}, in other words the free
energy. If this is the case it should hopefully allow to control the
difference between the true free energy and the Bethe free energy. 

The loop 
expansion has a potential interest in coding theory since Low-Density-Parity-Check (LDPC) and Low-Density-Generator-Matrix
(LDGM) codes on general binary-input memoryless symmetric (BMS) channels fit in the framework of (generalized)
vertex models. 
In this context free energy is just another name for conditional input-output Shannon entropy.
For these models it is believed that the Bethe formula for the conditional entropy/free energy is exact. However there is no general proof,
except for the cases of the binary erasure channel \cite{binary}, LDGM codes for high noise, and in special situations for LDPC codes
at low noise \cite{Kudekar-Macris-2}.

We consider cycle LDPC codes for {\it high noise} (above the MAP threshold) on 
the binary symmetric channel (BSC). We show that, {\it under the assumption 
that there exists a fixed point for the BP equations, the average conditional entropy/free energy 
is given by the Bethe expression}. 
The novelty of the approach is to turn the loop expansion
into a rigorous tool allowing to derive provably convergent {\it polymer expansions} \cite{Brydges}. 
Controlling the loop expansion is a non-trivial task because in most situations of interest
 the number of loops proliferates. 
For example, this is the case (for the system of fundamental cycles) in 
capacity approaching codes even under MAP decoding \cite{sason}.

\section{Loop and polymer representations}\label{loop}
Let
$\Gamma=(V,E)$ be a graph with vertices $a\in V$ of regular degree $d$ 
 and edges $ab\in E$. The symbol
$\partial a$ denotes the set of $d$ neighbors of $a$. 
In vertex models the degrees of freedom are spins $\sigma_{ab}%
\in\{-1,+1\}$ attached to each edge.
At each function node $a\in V$ we attach a non-negative
function $f_{a}(\sigma_{\partial a})$ depending only on neighboring variables
$\sigma_{\partial a}\equiv(\sigma_{ab})_{b\in\partial a}$. We study
probability distributions which can be factorized as
\begin{equation}
\mu_{\Gamma}\left(  \vec{\sigma}\right)  =\frac{1}{Z_{\Gamma}}\prod_{a\in
V}f_{a}\left(  \vec{\sigma}_{\partial a}\right)  ,~Z_{\Gamma}=\sum
_{\vec{\sigma}}\prod_{a\in V}f_{a}\left(  \vec{\sigma}_{\partial a}\right),
\label{proba}
\end{equation}
and their associated free energy $f_{n}\equiv\frac{1}{n}\ln Z_{\Gamma}$.

For each edge $ab\in E$ we introduce
two directed \textquotedblleft messages\textquotedblright\ $\eta_{a\rightarrow
b}$ and $\eta_{b\rightarrow a}$. For the moment these variables are
 arbitrary and are collectively denoted by $\vec{\eta}$.
One has the identity \cite{Cherktov-Chernyak}
\begin{equation}
f_{n}=\frac{1}{n}\ln Z_{\mathrm{Bethe}}(\vec{\eta})+\frac{1}{n}\ln
Z_{\mathrm{corr}}(\vec{\eta}).
\label{iden}
\end{equation}
The first term is the Bethe free energy functional,
\begin{align}
\ln Z_{\mathrm{Bethe}}(\vec{\eta})=  &  \sum_{a\in V}\ln\left(  \sum
_{\vec{\sigma}_{a}}f_{a}\left(  {\sigma}_{\partial a}\right)  \prod
_{b\in\partial a}e^{\eta_{b\rightarrow a}\sigma_{ab}}\right) \nonumber\\
&  -\sum_{ab\in E}\ln\left(  2\cosh\left(  \eta_{a\rightarrow b}%
+\eta_{b\rightarrow a}\right)  \right).  
\label{bethefunc}%
\end{align}
The \textquotedblleft partition function\textquotedblright\ in the second term
can be expressed as a sum over \textit{all} subgraphs (not necessarily
connected) $g\subset\Gamma$
\begin{equation}
Z_{\mathrm{corr}}\left(  \vec{\eta}\right)  =\sum_{g\subset\Gamma} K(g) 
\label{sum} 
\end{equation}
and $K(g) = \prod_{a\in
g}K_{a}$ with
\begin{equation}\nonumber
K_{a}=\sum_{\vec{\sigma}_{a}}p_{a}\left(  {\sigma}_{\partial a}\right)
\prod_{b\in\partial a\cap g}\sigma_{ab}e^{-\sigma_{ab}(\eta_{a\rightarrow
b}+\eta_{b\rightarrow a})} 
\end{equation}%
\begin{equation}\nonumber
p_{a}\left(  {\sigma}_{\partial a}\right)  =\frac{f_{a}\left(  \vec{\sigma
}_{\partial a}\right)  \prod_{b\in\partial a}e^{\eta_{b\rightarrow a}%
\sigma_{ab}}}{\sum_{\vec{\sigma}_{a}}f_{a}\left(  \vec{\sigma}_{\partial
a}\right)  \prod_{b\in\partial a}e^{\eta_{b\rightarrow a}\sigma_{ab}}}.
\end{equation}
It is well known that the stationary points of \eqref{bethefunc}
satisfy the BP fixed point equations,
\begin{equation}
\eta_{a\rightarrow c}=\frac{\sum_{\vec{\sigma}_{a}}\sigma_{ac}f_{a}\left(
\vec{\sigma}_{\partial a}\right)  \prod_{b\in\partial a}^{b\neq c}%
e^{\eta_{b\rightarrow a}\sigma_{ab}}}{\sum_{\vec{\sigma}_{a}}f_{a}\left(
\vec{\sigma}_{\partial a}\right)  \prod_{b\in\partial a}^{b\neq c}%
e^{\eta_{b\rightarrow a}\sigma_{ab}}}. 
\label{eqn bp for vm}
\end{equation}
Remarkably, for any solution of \eqref{eqn bp for vm}, $K(g)=0$
if $g$ contains a degree one node. Thus if $\vec{\eta}$ is a fixed point of
the BP equations then $Z_{\mathrm{corr}}(\vec{\eta})$ is given by the sum in
\eqref{sum} over $g\subset\Gamma$ with no degree one nodes. Such graphs
 are called \textit{loops} (see figure \ref{fig loop to polym}).

One can recognize that
$Z_{\mathrm{corr}}$ can be interpreted as the partition function of a system
of polymers. A loop $g\subset\Gamma$ can be decomposed into its connected
parts in a unique way as illustrated on figure \ref{fig loop to polym}.
Connected loops are called polymers and are generically denoted by
the letter
$\gamma$. The important point is that {\it by definition} the polymers do not
intersect.
\begin{figure}[h]%
\centering
\includegraphics[
height=2.00in,
width=3.00in
]%
{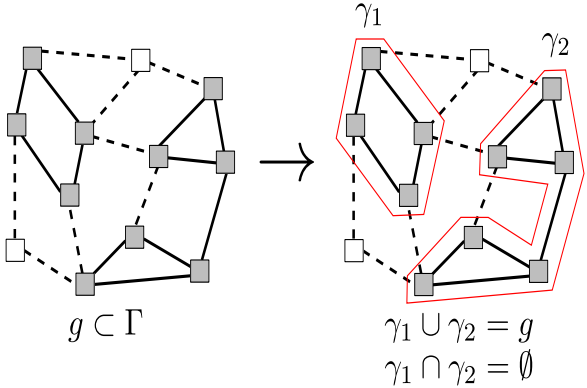}%
\caption{Left: an example of a loop graph $g$ with no dangling edge. Right:
decomposition of $g$ into its connected parts $\gamma_{i}$.}%
\label{fig loop to polym}%
\end{figure}
For each polymer $\gamma$ we define a weight (also called activity),
$K(\gamma)=\prod_{a\in\gamma}K_{a}$. Let $g=\cup_{i=1}^{M}\gamma_{i}$. Since
the $\gamma_{i}$ are disjoint, $\prod_{a\in g}K_{a}=\prod_{i=1}^{M}%
K(\gamma_{i})$. Thus equation \eqref{sum} can be cast in the form%
\begin{equation}
Z_{\rm corr}(\vec{\eta})=\sum_{M\geq0}\frac{1}{M!}\sum_{\gamma_{1},..,\gamma_{M}%
}\prod_{i=1}^{M}K\left(  \gamma_{i}\right)  \prod_{i<j}\mathbb{I}\left(
\gamma_{i}\cap\gamma_{j}=\emptyset\right).
\label{eqn zp}%
\end{equation}
In this sum each $\gamma_{i}$ runs over all connected subgraphs with no
dangling edges of the underlying graph $\Gamma$. The sum over the
number of polymers $M$ has a finite number of terms because the polymers
cannot intersect.

In the next paragraphs $\Gamma$ is a random $d$-regular graph. We 
denote by $\mathbb{P}$ and $\mathbb{E}$ the relevant probability and expectation over this ensemble.

\section{Polymer expansion}\label{mayer}

We wish to compute the correction to the Bethe free energy in
\eqref{iden}, namely $f_{corr}(\vec{\eta})\equiv\frac{1}{n}\ln Z_{\mathrm{corr}%
}(\vec{\eta})$ when $\vec{\eta}$ is a BP fixed point. Using \eqref{eqn zp} the
logarithm can be expanded as a power series in $K(\gamma_{i})$'s. This
yields the polymer (or Mayer) expansion \cite{Brydges}
\begin{align}
f_{corr}\left(  \vec{\eta}\right)   &  =\frac{1}{n}\sum_{M=1}^{\infty}\frac
{1}{M!}\sum_{\gamma_{1},..,\gamma_{M}}\prod_{i=1}^{M}K\left(  \gamma
_{i}\right) \nonumber\\
&  \times\sum_{G\subset\mathcal{G}_{M}}\prod_{\left(  i,j\right)  \in
G}\{-\mathbb{I}\left(  \gamma_{i}\cap\gamma_{j}\neq\emptyset\right)  \}.
\label{eqn mayer expansion poly}%
\end{align}
The third sum is over the set $\mathcal{G}_{M}$ of all 
{\it connected} graphs with $M$ 
vertices labeled by $\gamma_1,...,\gamma_M$, 
and at most one edge between each pair of vertices. 
The product of indicator functions is over edges
$(i,j)\in G$. It 
constrains the set of polymers $\gamma
_{1},...,\gamma_{M}$ to intersect according to the structure of $G$.
In this expansion one sums over an infinite number of terms so it is important
to address the question of convergence.
A criterion which ensures the convergence of the expansion {\it uniformly in system
size} $n$ (and thus ensures convergence in the infinite size limit) is
\begin{equation}
\sum_{t=0}^{+\infty}\frac{1}{t!}\sup_{a\in V}\sum_{\gamma\ni a}|\gamma
|^{t}|K(\gamma)|<1 
\label{criterion}
\end{equation}

To illustrate the use of the polymer expansion in a simple case,
consider a vertex model at high temperature defined by
\begin{equation}\nonumber
f_{a}(\sigma_{\partial a})=\frac{1}{2}(1+\tanh J_{a}\prod_{b\in\partial
a}\sigma_{ab})e^{\frac{1}{2}h_{ab}\sigma_{ab}}
\end{equation}
where $J_{a}$ and $h_{ab}$ are $\in\mathbb{R}$  with $\sup_{a\in V}J_{a}\equiv J<<1$
and $\sup_{ab\in E}h_{ab}<h<+\infty$. 
For $J$ is small enough the BP (\ref{eqn bp for vm})
equations have a unique fixed point solution  \cite{S. C. Tatikonda and M.
I. Jordan}. We call $\vec{\eta}_{n}^{\ast}$ this fixed point. The subscript
$n$ indicates (with some abuse of notation) that this fixed point depends on
the finite instance, that is,
the graph $\Gamma$, and $J_{a}$, $h_{ab}$. For the activities of the 
polymers, computed at the fixed point, we have the
bounds $\vert K(\gamma_{i})\vert\leq(2J)^{\vert\gamma_{i}\vert}$. Moreover the number of 
polymers $\gamma\ni a$ is (for each $a$) at most $e^{c_d \vert\gamma\vert}$ with $c_d\geq 0$ a numerical  constant
depending only on $d$. Using also that the smallest polymer must have $\vert\gamma\vert\geq 3$, it is then easily shown that the left hand side of \eqref{criterion} is $O(J^3) <<1$. By standard methods \cite{Brydges} one can then estimate the sum over $M$ in \eqref{eqn mayer expansion poly} term by term, which yields
\begin{equation}
|f_{\mathrm{corr}}\left(  \vec{\eta}_{n}^{\ast}\right)  |\leq (1 + O(J^3)) \frac
{1}{n}\sum_{a\in V}\sum_{\gamma\ni a}(2J)^{|\gamma|}e^{|\gamma|}
\label{ineg sur f}%
\end{equation}

\begin{proposition}\label{randomprop} 
For $J<J_0(h)$ small enough, we have
$\lim_{n\to+\infty}\mathbb{E}[f_{\mathrm{corr}}(\vec\eta_{n}^{*})] = 0$.
\end{proposition}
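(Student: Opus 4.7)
The starting point is the bound \eqref{ineg sur f}: averaging over the random ensemble,
\[
\mathbb{E}\bigl[|f_{\mathrm{corr}}(\vec\eta_n^*)|\bigr] \leq (1+O(J^3))\,\frac{1}{n}\,\mathbb{E}\Bigl[\sum_{a\in V}\sum_{\gamma\ni a}(2Je)^{|\gamma|}\Bigr],
\]
so it suffices to show the right hand side vanishes as $n\to\infty$. The plan is to exploit the well-known fact that random $d$-regular graphs are locally tree-like, so that polymers through any fixed vertex are atypical in the large $n$ limit.

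Using the vertex-transitivity in distribution of the random $d$-regular ensemble, the expectation reduces to
\[
\sum_{\ell\geq 3}(2Je)^\ell\,\mathbb{E}\bigl[N_\ell(a_0)\bigr],
\]
where $a_0$ is any fixed vertex and $N_\ell(a_0)$ is the number of polymers with $|\gamma|=\ell$ containing $a_0$. The main combinatorial step is to prove $\mathbb{E}[N_\ell(a_0)] = O_\ell(1/n)$ for each fixed $\ell$. A polymer on $\ell$ vertices has at least $\ell$ edges since its minimum internal degree is $2$. Working in the configuration model, the probability that a prescribed set of $e$ edges all appear in $\Gamma$ is $O(n^{-e})$; there are $O(n^{\ell-1})$ choices for the remaining $\ell-1$ vertices; and the number of possible polymer topologies on $\ell$ labelled vertices with degrees between $2$ and $d$ is bounded by a constant depending only on $\ell$ and $d$. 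Combining these factors gives $\mathbb{E}[N_\ell(a_0)] = O(n^{\ell-1-e}) = O(1/n)$.

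To conclude, I would truncate the sum over $\ell$ at a threshold $L$. The finite partial sum $\sum_{\ell=3}^L(2Je)^\ell\mathbb{E}[N_\ell(a_0)]$ is $O_L(1/n) \to 0$ as $n\to\infty$. For the tail, I would use the deterministic combinatorial bound $N_\ell(a_0) \leq e^{c_d\ell}$, which is the same growth estimate invoked in deriving \eqref{ineg sur f}; provided $J$ is chosen so that $2Je^{1+c_d}<1$, the tail $\sum_{\ell>L}(2Je^{1+c_d})^\ell$ decays geometrically in $L$. Taking first $n\to\infty$ and then $L\to\infty$ yields the claim.

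The main obstacle is making the fixed-$\ell$ estimate $\mathbb{E}[N_\ell(a_0)] = O(1/n)$ fully rigorous: one must enumerate polymer topologies carefully in the configuration model and verify that the implicit constants are controlled uniformly across structural types (simple versus multi-edged, varying cyclomatic number, etc.). The truncation and tail bound are then routine.
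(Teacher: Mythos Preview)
Your argument is correct, but it is organized differently from the paper's. The paper's sketch invokes a single structural fact about random $d$-regular graphs: with probability $1-o_n(1)$, the fixed vertex $o$ is contained in no polymer of size below $a_d\ln n$ (equivalently, the $a_d\ln n$-neighborhood of $o$ is a tree). On that event the entire sum $\sum_{\gamma\ni o}(2Je)^{|\gamma|}$ is dominated by the geometric tail $\sum_{\ell\geq a_d\ln n}(2Je^{1+c_d})^{\ell}=o_n(1)$; on the complementary event the sum is bounded by the full convergent series, and the $o_n(1)$ probability kills that contribution. You instead compute $\mathbb{E}[N_\ell(a_0)]=O_\ell(1/n)$ term by term via configuration-model counting and then truncate at a fixed level $L$. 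The paper's route is shorter because the locally-tree-like property gives a single clean cutoff and avoids any truncation; your route is more explicit about the combinatorics and does not need to invoke the tree-neighborhood fact as a black box (indeed, your $O_\ell(1/n)$ bound, via Markov, is precisely what underlies the paper's high-probability statement). One remark on your ``main obstacle'': once you truncate at a fixed $L$, you do not need the implicit constants in $O_\ell(1/n)$ to be uniform in $\ell$---any $\ell$-dependent bound suffices for the finitely many terms $\ell\leq L$, and the tail is already handled by the deterministic $e^{c_d\ell}$ count---so that obstacle is less serious than you suggest.
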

\begin{proof}[\it Proof idea]
From \eqref{ineg sur f} $\mathbb{E}[\vert f_{\mathrm{corr}}\left(  \vec{\eta}_{n}^{\ast}\right)\vert] \leq 
(1+O(J^3))\mathbb{E}[\sum_{\gamma\ni o}
(2J)^{\vert\gamma\vert} e^{\vert\gamma\vert}]$. Here $o$ is any specified node in the graph. In order to conclude it suffices to use the fact that on a random $d$-regular graph, with probability $1-o_n(1)$, polymers have a size
$\vert\gamma\vert\geq a_d\ln n$ ($a_d\geq 0$ a positive numerical constant).
\end{proof}

\section{Cycle LDPC codes over the BSC}

Random $d$-regular graphs
are equivalent to the ${\rm LDPC}(2,d)$ ensemble of cycle codes.
Code bits
$x_{ab}=0,1$ are attached to the edges $ab\in E$. In the spin language bits are
$\sigma_{ab}=\pm1$ and the parity check constraints are $\prod_{b\in\partial
a}\sigma_{ab}=1$. For definiteness we assume transmission over the ${\rm BSC}(p)$, $p\in\lbrack0,\frac{1}{2}]$.
Without loss of generality
one can assume that the transmitted word is $(1,...,1)$ so that MAP decoding
is based on the posterior distribution (\ref{proba}) with%
\begin{equation}
f_{a}(\sigma_{\partial a})=\frac{1}{2}(1+\prod_{b\in\partial a}\sigma
_{ab})\prod_{b\in\partial a}e^{\frac{1}{2}h_{ab}\sigma_{ab}}. 
\label{partit2}
\end{equation}
Here $h_{ab}$ is the half-log-likelihood for the bit $\sigma_{ab}=\pm1$, based 
on the channel output. The Shannon conditional input-output entropy and free energy are essentially equivalent, and related by the simple formula,
\begin{equation}
\frac{1}{n}H(\vec{X}|\vec{Y})=\mathbb{E}_{\vec{h}}[f_{n}(\vec{h})]-\frac
{1-2p}{2}\ln\frac{1-p}{p}
\label{shannon-free}
\end{equation}
where $\mathbb{E}_{\vec{h}}$ is the average over channel outputs (or the log-likelihood vector). 

We interested in the high noise regime where $p$ is close to $1/2$. Therefore
we seek solutions of the BP equations such that $\sup_{ab\in E}|h_{ab}|\leq h$
where $h>0$ is a fixed small number. We assume that for $h$ small enough there exists a fixed point of the BP equations
for each finite instance\footnote{This assumption can be relaxed by softening the hard constraint
in \eqref{partit2} and using existence results \cite{Yedidia}. Indeed all our estimates are uniform in the softening parameter. We omit this discussion here due to lack of space.}. We denote it $\vec{\eta}_{n}^{\ast}$ as before. Note that assuming its unicity is not needed.

\begin{proposition}\label{prop2}
Assuming the existence of a fixed point  $\vec{\eta}_{n}^{\ast}$ of the BP equations
for $h$ small enough, we have
$
\lim_{n\rightarrow+\infty}\mathbb{E}[\frac{1}{n}\ln Z_{\mathrm{corr}
}(\vec{\eta}_{n}^{\ast})]=0.
$
\end{proposition}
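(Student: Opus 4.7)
The plan is to follow the blueprint of Proposition~\ref{randomprop}: show that the activities $K(\gamma)$ at the BP fixed point decay exponentially in $|\gamma|$ so that the convergence criterion \eqref{criterion} holds uniformly in $n$, and then use the fact that polymers on a random $d$-regular graph have size $\Omega(\log n)$ with high probability to conclude that the expected Mayer series vanishes. A preliminary step is to pin down the relevant fixed point: for $h$ small enough the parity-check BP map \eqref{eqn bp for vm} is a contraction in $\ell^\infty$ near the origin, so the fixed point $\vec\eta^{\ast}_{n}$ obtained by continuation from $\vec h = 0$ satisfies $|\eta^{\ast}_{a\to b}|\leq Ch$ for every $ab\in E$.

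The technical heart is a pointwise bound $|K_a|\leq C' h^{\alpha}$ with $\alpha>0$, uniform over $a\in V$ and over polymers $g$ containing $a$. Writing $S = \partial a\cap g$, $u_b = \eta^{\ast}_{a\to b}+\eta^{\ast}_{b\to a}$, $t_b = \tanh u_b$ and $\tau_b = \tanh\eta^{\ast}_{b\to a}$, one has
\[
K_a = \prod_{b\in S}\cosh u_b\cdot\Big\langle\prod_{b\in S}(\sigma_{ab}-t_b)\Big\rangle_{p_a}.
\]
At a BP fixed point the single-edge belief satisfies $\langle\sigma_{ab}\rangle_{p_a}=t_b$, which is exactly the identity underlying the vanishing of $K_a$ on degree-one nodes quoted after \eqref{eqn bp for vm}; consequently, in the expansion of the product, every term containing exactly one factor $\sigma_{ab}-t_b$ integrates to zero. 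To control the remaining terms I would use the explicit parity-check correlator
\[
\Big\langle\prod_{b\in A}\sigma_{ab}\Big\rangle_{p_a}=\frac{\prod_{b\in A}\tau_b+\prod_{b\notin A}\tau_b}{1+\prod_{b\in\partial a}\tau_b},
\]
together with the BP relation $\tanh\eta^{\ast}_{a\to b}=\prod_{c\neq b}\tau_c$, to show that after the BP-induced cancellations every surviving term carries at least one factor of order $h$, giving $|K(\gamma)|\leq (C''h^{\alpha})^{|\gamma|}$.

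The rest is mechanical and mirrors the proof of Proposition~\ref{randomprop}. The number of connected subgraphs of size $s$ through a vertex of a $d$-regular graph is at most $e^{c_d s}$, hence \eqref{criterion} is satisfied for $h$ small enough, the Mayer estimate analogous to \eqref{ineg sur f} goes through, and translation invariance of the random $d$-regular ensemble gives
\[
\mathbb{E}\big[|f_{\mathrm{corr}}(\vec\eta^{\ast}_{n})|\big]\leq (1+o(1))\,\mathbb{E}\Big[\sum_{\gamma\ni o}(C''h^{\alpha})^{|\gamma|}e^{|\gamma|}\Big]
\]
for any fixed root $o$. On the event that the girth of $\Gamma$ exceeds $a_d\log n$, which has probability $1-o_n(1)$, every polymer through $o$ has size $\geq a_d\log n$, so this expectation is at most $n^{-\beta}$ for some $\beta>0$; a crude deterministic bound on the exceptional event, combined with the uniformity of all of the above in $\vec h$, closes the argument.

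The main obstacle is the activity estimate for an \emph{internal} polymer vertex, where $S=\partial a$. In that case the naive leading contribution $\prod_{b\in\partial a}\cosh u_b\cdot\langle\prod_{b\in\partial a}\sigma_{ab}\rangle_{p_a}$ is $O(1)$, because parity forces $\langle\prod_{b\in\partial a}\sigma_{ab}\rangle_{p_a}\to 1$ as $h\to 0$, and no smallness emerges from the expansion without further input. Extracting the missing factor requires a careful use of the BP fixed-point identity to produce a cancellation between the $|A|=d$ and $|A|=0$ pieces of the correlator formula above. Once this delicate point is settled, the remainder of the argument is essentially the template of Proposition~\ref{randomprop}.
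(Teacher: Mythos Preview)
Your plan has a genuine gap, and you have actually put your finger on it yourself: for an internal polymer vertex with $S=\partial a$, no cancellation of the kind you hope for exists. Under the parity constraint $\prod_{b\in\partial a}\sigma_{ab}=1$ one has $\langle\prod_{b\in\partial a}\sigma_{ab}\rangle_{p_a}=1$ identically, and expanding $\prod_{b\in\partial a}(\sigma_{ab}-t_b)$ gives $K_a=\prod_b\cosh u_b\cdot\bigl(1-\sum_b t_b\langle\sigma_{ab}\rangle_{p_a}+O(h^2)\bigr)=1-O(h^2)$. This is exactly the estimate \eqref{activity-cycle} in the paper, and the paper states explicitly that the factor $(1-\alpha_d\tfrac{d}{2}h^2)^{n_d(\gamma)}$ is \emph{not} small enough to beat the entropic $e^{c_d|\gamma|}$. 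Concretely, the whole graph $\Gamma$ is itself a polymer with every vertex of full degree, and so are many large sub-polymers; their activities decay only like $e^{-O(h^2)|\gamma|}$, while their number grows like $e^{c_d|\gamma|}$ with $c_d$ independent of $h$. Hence criterion \eqref{criterion} fails for small $h$, and the template of Proposition~\ref{randomprop} cannot be run as written. The girth argument only rules out \emph{short} polymers; it does nothing about this large-polymer entropy problem.

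The paper's proof circumvents this by splitting \eqref{split} the loop sum at $|\gamma|=n/2$. For small polymers one does get enough smallness, but not from $K_a$ alone: one uses that a random $d$-regular graph is with high probability a $(0.18\,d)$-edge-expander, which forces $\sum_{i<d}n_i(\gamma)\geq 0.18|\gamma|$ for any $|\gamma|<n/2$, so the genuinely small factors $(\alpha_i h^{d-i})^{n_i(\gamma)}$ in \eqref{activity-cycle} supply $|K(\gamma)|\leq (2h)^{0.18|\gamma|}$. This is Lemma~\ref{prop1}. Large polymers are handled separately by a first-moment (counting) argument over the ensemble, using McKay's subgraph-probability bound, to show that $\sum_{|\gamma|\geq n/2}|K(\gamma)|$ is small with high probability (Lemma~\ref{lemmalarge}), together with a crude ratio bound (Lemma~\ref{lemmaratio}). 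The two pieces are then recombined via a conditioning argument. So the missing ingredients in your approach are precisely the expander input for small polymers and the ensemble counting for large ones; without them the activity estimate you need is simply false.
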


In view of \eqref{iden}, \eqref{shannon-free} the proposition implies that {\it the average conditional entropy is given by the average of the Bethe 
expression computed at the fixed point} (for $\vert p-\frac{1}{2}\vert << 1$).

In order to prove the proposition we will use the identity
\begin{equation}
\ln Z_{\mathrm{corr}}(\vec{\eta})=\ln Z_{p}(\vec{\eta
})+\ln \biggl\{  1+\sum_{\gamma\subset\Gamma}^{|\gamma|\geq
n/2}K(\gamma)\frac{Z_{p}(\vec{\eta}\mid\gamma)}{Z_{p}(\vec{\eta})}
\biggr\}
\label{split}
\end{equation}
where
\begin{align}
Z_{p}(\vec{\eta}\mid\gamma)=\sum_{M\geq0}\frac{1}{M!}\sum_{\gamma
_{1},...,\gamma_{M}}^{\mathrm{all|\gamma_{i}|<n/2}}  &  \prod_{i=1}%
^{M}K(\gamma_{i})\mathbb{I}(\gamma_{i}\cap\gamma=\emptyset)\nonumber\\
&  \times\prod_{i<j}\mathbb{I}(\gamma_{i}\cap\gamma_{j}=\emptyset)
\end{align}
and $Z_{p}(\vec{\eta})\equiv Z_{p}(\vec{\eta}\vert\emptyset)$. This identity is
derived by splitting the sum over $\gamma_{1},...,\gamma_{M}$ in \eqref{eqn zp}, into a
sum where all polymers are small ($\forall i,|\gamma_{i}|<n/2$), and a sum
where there exists at least one large polymer 
($\exists i,|\gamma_{i}|\geq n/2$); and by noting that when there exists a large polymer it has to be
unique. 

We will need three lemmas.
\begin{lemma}
\label{prop1} For $h$ small enough we have
$
\lim_{n\rightarrow+\infty}\mathbb{E}[\frac{1}{n}\ln Z_{p}(\vec{\eta
}_{n}^{\ast})]=0.
$
\end{lemma}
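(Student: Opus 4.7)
The plan is to mirror the strategy of Proposition~\ref{randomprop}, applied to the polymer expansion of $Z_p(\vec\eta_n^*)$ rather than of $Z_{\mathrm{corr}}$. Since $Z_p$ is of the form \eqref{eqn zp} restricted to polymers of size below $n/2$, the Mayer expansion \eqref{eqn mayer expansion poly} applies verbatim and \eqref{criterion} remains a sufficient convergence criterion.

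The first step is to produce an activity bound of the form $|K(\gamma)|\leq (C_2 h)^{|\gamma|}$, uniform in the instance, when $\vec\eta$ equals any BP fixed point $\vec\eta_n^*$. For the hard-parity factor \eqref{partit2} the BP equations \eqref{eqn bp for vm} propagate the smallness of the channel fields to the messages: a standard contraction argument forces $\sup_{ab}|\eta_{a\to b}^*|\leq C_1 h$ for any fixed point whenever $h$ is sufficiently small. Inserting this into the definition of $K_a$ and exploiting the parity structure of $p_a$ together with the factors $\sigma_{ab}$ for $b\in\partial a\cap g$ yields the advertised geometric decay.

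With the activity bound, the convergence of \eqref{criterion} is verified exactly as in the high-temperature example: the number of connected no-dangling-edge subgraphs of size $k$ through a given vertex is at most $e^{c_d k}$, so the left-hand side of \eqref{criterion} is bounded by a convergent geometric sum once $C_2 h$ is small. Term-by-term estimation of the Mayer series then produces
\[
\Bigl|\tfrac{1}{n}\ln Z_p(\vec\eta_n^*)\Bigr|\leq (1+o(1))\,\frac{1}{n}\sum_{a\in V}\sum_{\gamma\ni a,\,|\gamma|<n/2}(C_2 h)^{|\gamma|}e^{|\gamma|}.
\]
Taking expectations and using vertex-transitivity of the random $d$-regular ensemble reduces this to $\mathbb{E}[\sum_{\gamma\ni o,\,|\gamma|<n/2}(C_2 h)^{|\gamma|}e^{|\gamma|}]$ for any fixed vertex $o$. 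The standard fact that the girth of a random $d$-regular graph is at least $a_d\ln n$ with probability $1-o_n(1)$ forces every polymer to have $|\gamma|\geq a_d\ln n$ on this event, in which case the sum is $O(n^{-\alpha})$ for some $\alpha>0$; on the low-probability complementary event the uniform Mayer bound controls the contribution. One then concludes $\mathbb{E}[|\tfrac{1}{n}\ln Z_p(\vec\eta_n^*)|]\to 0$.

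The main obstacle is step one. Contraction alone only tells us that $\vec\eta_n^*$ is uniformly small; to extract a genuine factor of $h$ per vertex of $\gamma$ one must exploit a cancellation in $K_a$ stemming from the parity constraint in $p_a$, under which at $h=0$ most correlators $\langle\prod_{b\in\partial a\cap g}\sigma_{ab}\rangle_{p_a}$ vanish and are only $O(h)$ for small $h$. Making this quantitative uniformly over polymer geometries and compatible with the Mayer bookkeeping leading to \eqref{criterion} is where the real work lies.
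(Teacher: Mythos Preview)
Your activity bound $|K(\gamma)|\le (C_2h)^{|\gamma|}$ cannot hold, and the difficulty is not one of uniformity. Consider a vertex $a$ all of whose $d$ incident edges lie in $\gamma$ (a degree-$d$ node of the polymer). Because of the hard parity constraint in \eqref{partit2}, the measure $p_a$ is supported on $\prod_{b\in\partial a}\sigma_{ab}=1$, so the product $\prod_{b\in\partial a\cap\gamma}\sigma_{ab}=\prod_{b\in\partial a}\sigma_{ab}$ equals $1$ identically and there is \emph{no} cancellation whatsoever. At $h=0$ one gets $K_a=1$, and for small $h$ only $K_a=1-O(h^2)$. A polymer whose vertices are mostly of full degree therefore has $|K(\gamma)|$ close to $1$, not exponentially small in $|\gamma|$, and the convergence criterion \eqref{criterion} fails by entropy alone. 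This is precisely the obstruction the paper records in \eqref{activity-cycle}: the factor $(1-\alpha_d\frac{d}{2}h^2)^{n_d(\gamma)}$ is not small enough to beat $e^{c_d|\gamma|}$.

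The missing idea is an \emph{expander} argument, and it is exactly here that the restriction $|\gamma|<n/2$ built into $Z_p$ is used (in your outline that restriction plays no role). A random $d$-regular graph has edge expansion $\kappa\approx 0.18\,d$ with probability $1-o_n(1)$: every vertex set of size at most $n/2$ sends at least $\kappa$ times its size many edges to its complement. Since all edges leaving a polymer are incident to vertices of degree $<d$ in $\gamma$, this forces $\sum_{i=2}^{d-1}n_i(\gamma)\ge 0.18\,|\gamma|$, so a constant fraction of polymer vertices have degree $<d$. For those vertices the parity correlator \emph{does} vanish at $h=0$, giving $K_a=O(h^{d-i})$, and one obtains the usable bound $|K(\gamma)|\le (2h)^{0.18|\gamma|}$ for all $|\gamma|<n/2$. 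From this point your outline (Mayer convergence via \eqref{criterion}, then the girth bound $|\gamma|\ge a_d\ln n$ as in Proposition~\ref{randomprop}) goes through. So the overall route is right, but the structural input that rescues the activity bound is graph expansion, not BP contraction or a per-vertex parity cancellation.
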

\begin{proof}[\it Sketch of Proof]
It is possible to estimate the activities computed at the fixed point,
\begin{equation}
\vert K(\gamma)\vert
\leq(1-\alpha_{d}\frac{d}{2}h^{2})^{n_{d}(\gamma)}\prod_{i=2}
^{d-1}(\alpha_{i}h^{d-i})^{n_{i}(\gamma)}.
 \label{activity-cycle}
\end{equation}
Here $0<\alpha_{d}<1$, and $\alpha_{i}>1$, $i=2,...,d-1$ are fixed numerical
constants (that we can take close to $1$). The $n_{i}(\gamma)$ denotes the
number of nodes of degree $i$ {\it in the polymer} $\gamma$.
Estimate \eqref{activity-cycle} is essentially optimal for $h$ small, as can be
checked by Taylor expanding $K(\gamma)$ in powers of $h_{ab}$. Hard
constraints manifest themselves in the factor $(1-\alpha
_{d}\frac{d}{2}h^{2})^{n_{d}(\gamma)}$ which {\it is not small enough} to compensate the 
entropic term $e^{c_{d}\vert\gamma\vert}$ in the convergence criterion. 
However for polymers of size $\vert\gamma
\vert<\frac{n}{2}$ we can use expander arguments to circumvent this problem. 
Let $e(g)$ the set
of edges in $E$ connecting $g$ to $\Gamma\setminus g$. We say that $\Gamma$ is
an expander if for all $g\subset\Gamma$ with $|g|\leq\frac{n}{2}$ we have
$|e(g)|\geq\kappa|g|$. For all $d\geq 3$ \cite{Bollobas}
\begin{equation}
\mathbb{P}[\Gamma\mathrm{{~is~an~expander~with~}\kappa=0.18\, d]=1-o_n(1)}
\label{expansion}
\end{equation}
Now note that for
polymers
$
e(\gamma)\leq\sum_{i=2}^{d-1}(d-i)n_{i}(\gamma)\leq d\sum_{i=2}^{d-1}
n_{i}(\gamma)
$. 
Therefore we deduce thanks to \eqref{expansion} that with high probability
$
\sum_{i=2}^{d-1}n_{i}(\gamma)\geq 0.18|\gamma|
$
and 
$
K(\gamma) \leq (2h)^{0.18\vert\gamma\vert}$
for 
$
\vert\gamma\vert <\frac{n}{2}
$
This is sufficient to control the convergence criterion,
and achieve the proof of this lemma by methods similarly to the high temperature case. 
\end{proof}

\begin{lemma}\label{lemmaratio} 
Fix $\epsilon>0$. Then
\begin{equation}\nonumber
\mathbb{P}[\forall\gamma\subset\Gamma:e^{-2n\epsilon}\leq\frac{Z_{p}(\vec
{\eta}\mid\gamma)}{Z_{p}(\vec{\eta})}\leq e^{2n\epsilon}]\geq1-\frac{1}{\epsilon}
o_n(1).
\end{equation}
\end{lemma}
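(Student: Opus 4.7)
The plan is to control the log--ratio via a Mayer cluster expansion and to bound the result uniformly in $\gamma$ by a random quantity that vanishes on average with $n$.

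First, I would apply the cluster expansion to both $\ln Z_p(\vec\eta\mid\gamma)$ and $\ln Z_p(\vec\eta)$. Since both are polymer partition functions restricted to polymers of size $<n/2$, the convergence criterion \eqref{criterion} is fulfilled under the hypotheses of Lemma \ref{prop1}. Subtracting the two expansions, the clusters whose constituent polymers all avoid $\gamma$ cancel, leaving a sum of connected clusters in which at least one polymer intersects $\gamma$.

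Second, I would apply the standard anchored polymer bound (see, e.g., \cite{Brydges}) to each such cluster at a vertex of $\gamma$, obtaining
\begin{equation}\nonumber
\Bigl|\ln \frac{Z_p(\vec\eta\mid\gamma)}{Z_p(\vec\eta)}\Bigr| \le |\gamma|\, Y_n, \quad Y_n := \sup_{a\in V}\sum_{\gamma'\ni a}^{|\gamma'|<n/2} |K(\gamma')|\, e^{|\gamma'|}.
\end{equation}
Since $|\gamma|\le n$ trivially, this furnishes a single deterministic estimate valid simultaneously for every $\gamma\subset\Gamma$, so the lemma reduces to showing that $Y_n$ is small with high probability.

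Third, I would prove $\mathbb{E}[Y_n]=o_n(1)$ by intersecting two high--probability events over the graph ensemble. On the expander event \eqref{expansion}, the activity bound derived in the proof of Lemma \ref{prop1} gives $|K(\gamma')|\le (2h)^{0.18|\gamma'|}$ for every polymer of size $<n/2$. On the event that the girth of $\Gamma$ exceeds $a_d\ln n$ (which holds with probability $1-o_n(1)$ for random $d$--regular graphs), every polymer satisfies $|\gamma'|\ge a_d\ln n$, since a polymer necessarily contains a cycle. Combining these with the counting bound $e^{c_d k}$ on the number of polymers of size $k$ through a fixed node yields $Y_n \le \sum_{k\ge a_d\ln n}(e^{c_d+1}(2h)^{0.18})^k = O(n^{-\beta})$ once $h$ is small enough that the geometric ratio is below one. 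The contribution to $\mathbb{E}[Y_n]$ from the complementary events is absorbed by the exponentially small non--expansion probability together with the crude estimate $|K(\gamma')|\le 1$. Markov's inequality then produces $\mathbb{P}[Y_n\ge 2\epsilon]\le \mathbb{E}[Y_n]/(2\epsilon)=\tfrac{1}{\epsilon}o_n(1)$, which is the stated bound.

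The main obstacle is the third step: the activity estimate alone furnishes only an $O(1)$ control on $Y_n$, too weak to make the log--ratio of order $o(n)$. The decay to zero is secured only by bringing in the logarithmic--girth property of random regular graphs, forcing every polymer through any given node to be long, and cleanly marrying this probabilistic input with the deterministic expander--based activity bound is the delicate point.
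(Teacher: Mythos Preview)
The paper does not actually give a proof of this lemma; it only remarks that ``the proof uses rather trivial bounds on the partition functions'' and omits the details. Your cluster--expansion route is a natural way to fill this in, and the first two steps are fine, but there is a genuine error in the third step.

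The claim that the girth of a random $d$--regular graph exceeds $a_d\ln n$ with probability $1-o_n(1)$ is false: for each fixed $\ell$ the number of $\ell$--cycles converges in distribution to a Poisson variable with positive mean $(d-1)^\ell/(2\ell)$, so with probability bounded away from zero the graph contains a triangle. Hence the event on which you rely to force every polymer through \emph{every} vertex to be long does not hold with high probability. Concretely, at any vertex $a$ lying on a short cycle the anchored sum contributes a term of order $(2h)^{0.18\cdot 3}e^{3}$, a positive constant independent of $n$, so your $Y_n=\sup_{a}\sum_{\gamma'\ni a}|K(\gamma')|e^{|\gamma'|}$ satisfies only $\mathbb{E}[Y_n]=O(1)$, not $o_n(1)$, and the Markov step fails. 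The statement you quote from Proposition~\ref{randomprop} concerns polymers through a \emph{single specified} node $o$ (whose $O(\ln n)$--neighbourhood is a tree w.h.p.); it does not upgrade to a girth bound.

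The fix is to drop the sup. Either bound $\sum_{a\in\gamma}(\cdots)\le\sum_{a\in V}(\cdots)$ in your anchored estimate, or, more simply, bound $\bigl|\ln Z_p(\vec\eta\mid\gamma)\bigr|$ and $\bigl|\ln Z_p(\vec\eta)\bigr|$ separately by the same quantity
\[
\sum_{a\in V}\sum_{\gamma'\ni a}^{|\gamma'|<n/2}|K(\gamma')|\,e^{|\gamma'|}=nX_n,
\]
which is independent of $\gamma$ and therefore uniform. By vertex symmetry of the ensemble, $\mathbb{E}[X_n]=\mathbb{E}\bigl[\sum_{\gamma'\ni o}|K(\gamma')|e^{|\gamma'|}\bigr]$ for one fixed node $o$, and now the local--tree fact from Proposition~\ref{randomprop} together with the expander activity bound from Lemma~\ref{prop1} legitimately yields $\mathbb{E}[X_n]=o_n(1)$. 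Markov's inequality then gives $\mathbb{P}[2X_n\ge 2\epsilon]\le\tfrac{1}{\epsilon}o_n(1)$, which is the lemma. This crude uniform bound is presumably the ``trivial bound'' the paper alludes to.
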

The proof uses rather trivial bounds on the partition functions. We omit the details.
\begin{lemma}
\label{lemmalarge} Fix $\delta>0$. There exists a numerical
constant $C > 0$ and such that for $h$ small enough
\begin{equation}
\mathbb{P}\left[  \sum_{g\subset\Gamma}^{ \vert g\vert>n/2}\vert K\left(
g\right)  \vert\geq\delta\right]  \leq\frac{C}{\delta}e^{-n\alpha_{d}\frac
{d}{2} h^{2}}.
\label{boundonprob}
\end{equation}
This inequality is a fortiori valid for $g$'s replaced by $\gamma$'s in the sum.
\end{lemma}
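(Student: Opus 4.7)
The plan is to reduce the probability bound to an expectation bound via Markov's inequality, and then control that expectation using the activity estimate \eqref{activity-cycle} together with the expander property of the random graph. Markov gives
\begin{equation*}
\mathbb{P}\biggl[\sum_{g,|g|>n/2}|K(g)|\geq\delta\biggr]\leq\frac{1}{\delta}\,\mathbb{E}\biggl[\sum_{g,|g|>n/2}|K(g)|\biggr],
\end{equation*}
so the task reduces to showing $\mathbb{E}[\sum_{g,|g|>n/2}|K(g)|]\leq C e^{-n\alpha_d d h^2/2}$ for some constant $C$.

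For each $g$ with $|g|>n/2$, I would apply the activity bound $|K(g)|\leq (1-\alpha_d dh^2/2)^{n_d(g)}\prod_{i=2}^{d-1}(\alpha_i h^{d-i})^{n_i(g)}$ from the proof of Lemma~\ref{prop1}. Observe that the target exponential $e^{-n\alpha_d dh^2/2}$ is exactly the activity estimate evaluated at the trivial large subgraph $g=V$, where every vertex is interior ($n_d(V)=n$); the content of the lemma is that this term dominates the sum. For $g\subsetneq V$, the complement $\bar g=V\setminus g$ satisfies $|\bar g|<n/2$, so the expander bound $|e(\bar g)|\geq 0.18 d|\bar g|$ used in Lemma~\ref{prop1} applies on the high-probability event \eqref{expansion}. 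This forces $|g|-n_d(g)\geq 0.18|\bar g|$, since each boundary edge reduces the degree in $g$ of its endpoint in $g$; more sharply, a boundary vertex of $g$ having $k$ edges into $\bar g$ contributes a factor $\alpha_{d-k}h^k$ to the activity estimate, so the total boundary contribution is of order $h^{|e(\bar g)|}$ times bounded combinatorial constants.

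Grouping terms by $\ell=|\bar g|$ and bounding the number of complements of size $\ell$ by $\binom{n}{\ell}$, the expectation reduces to a binomial-type sum of the form
\begin{equation*}
\mathbb{E}\biggl[\sum_{g,|g|>n/2}|K(g)|\biggr]\lesssim(1-\alpha_d dh^2/2)^n\sum_{\ell=0}^{\lfloor n/2\rfloor}\binom{n}{\ell}r(h)^{\ell},
\end{equation*}
where $r(h)$ is a quantity small in $h$ coming from the boundary suppression, divided by the $(1-\alpha)$ factors needed to compensate $n-|g|=\ell$.

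The main obstacle is ensuring this binomial sum does not spoil the $(1-\alpha)^n$ decay. Since $\binom{n}{\ell}$ can be exponentially large in $n$, the per-complement factor $r(h)$ must be small enough that $(1+r(h))^n$ is dominated by $(1-\alpha_d dh^2/2)^n$. This requires exploiting the finer per-edge $h^{|e(\bar g)|}$ suppression (each boundary edge contributes an $h$-factor via $\alpha_{d-k}h^k$, with higher powers of $h$ at boundary vertices of lower degree), rather than merely a per-boundary-vertex bound, combined with the expander lower bound on $|e(\bar g)|$ and a sufficiently small choice of $h$. Once this binomial control is in hand, Markov's inequality closes the argument and yields the claimed probability estimate.
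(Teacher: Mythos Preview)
Your approach has two genuine gaps, and differs fundamentally from the paper's argument.

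\textbf{Counting loops versus vertex sets.} The step ``grouping terms by $\ell=|\bar g|$ and bounding the number of complements of size $\ell$ by $\binom{n}{\ell}$'' is wrong. A loop $g$ is an \emph{edge}-subgraph of $\Gamma$, not a vertex set: for a fixed vertex set $S$ with $|S|>n/2$ there can be exponentially many loops with $V(g)=S$, corresponding to the different choices of which edges of $\Gamma[S]$ to include. The bound $\binom{n}{\ell}$ therefore undercounts by a factor that can be as large as $2^{\Theta(n)}$. Your binomial-type sum is missing this entropy, and it is precisely the competition between this edge entropy and the $h$-suppression that has to be analyzed; the crude per-vertex accounting does not capture it.

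\textbf{Expander inside an expectation.} You invoke the expander estimate \eqref{expansion} inside the bound on $\mathbb{E}\bigl[\sum_{|g|>n/2}|K(g)|\bigr]$. But \eqref{expansion} is only a high-probability statement with failure probability $o_n(1)$, and on the bad event the sum $\sum|K(g)|$ can be exponentially large in $n$. Unless you show that the $o_n(1)$ is itself exponentially small (which the paper does not claim), the contribution from the non-expander event swamps the target bound $e^{-n\alpha_d dh^2/2}$. In Lemma~\ref{prop1} this issue does not arise because there the expander is used to control a quantity that is \emph{already} bounded, whereas here you are trying to bound a raw expectation.

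\textbf{What the paper does instead.} The paper does \emph{not} use expansion for large loops. After Markov, it writes the expectation as a sum over labeled subgraphs $g$ of the complete graph $\mathcal{K}_n$, weighted by the activity bound \eqref{activity-cycle} times the probability $\mathbb{P}[g\subset\Gamma]$ that $g$ embeds in a random $d$-regular graph. The latter is controlled by McKay's bound \eqref{comb2}, and the number of subgraphs of $\mathcal{K}_n$ with a prescribed degree profile $(n_2,\ldots,n_d)$ is given by the explicit combinatorial formula \eqref{combi}. Passing to densities $x_i=n_i/n$ and doing a Laplace-type asymptotic, one checks that for small $h$ the exponent is maximized at the corner $x_2=\cdots=x_{d-1}=0$, $x_d=1$, where it equals $\ln(1-\alpha_d\frac{d}{2}h^2)$, yielding \eqref{boundonprob}. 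The key idea you are missing is this ``annealed'' treatment via $\mathbb{P}[g\subset\Gamma]$, which simultaneously handles the graph randomness and the full edge entropy of the loops.
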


\begin{proof}[\it Sketch of Proof]
We denote by $\mathcal{K}_{n}$ the complete graph with $n$ vertices. By
Markov's inequality,
\begin{align}
&\mathbb{P} \left[  \sum_{g\subset\Gamma}^{\left\vert g\right\vert
>n/2}|K\left(  g\right)  |\geq\delta\right]  \leq\frac{1}{\delta}%
\sum_{g\subset\mathcal{K}_{n}}^{\left\vert g\right\vert >n/2}\mathbb{E}[\left\vert K\left(  g\right)  \right\vert \mathbb{I}\left(
g\subset\Gamma\right)  ]
\nonumber\\&  
\leq\frac{1}{\delta}\sum_{g\subset\mathcal{K}_{n}}^{\left\vert g\right\vert
>n/2}\biggl\{(1-\alpha_{d}\frac{d}{2}h^{2})^{n_{d}(g)}
\prod_{i=2}^{d-1}(\alpha_{i}h^{d-i})^{n_{i}(g)}\biggr\}\mathbb{P}%
[g\subset\Gamma]
\label{prob}
\end{align}
Consider graphs $g$ with $n_{i}(g)$, $i=2,...,d$ fixed. Mackay \cite{MacKay}
 provides a bound for the probability\footnote{Here $[m]_i= m(m-1)...(m-i+1)$.}
$\mathbb{P}[g\subset\Gamma]$ of finding a particular subgraph into a regular
graph $\Gamma$. Namely for $\frac{1}{2}\sum_{i=2}^{d}in_{i}(g)+2d^{2}\leq
\frac{nd}{2}$,
\begin{equation}
\mathbb{P}[g\subset\Gamma]\leq\frac{\prod_{i=2}^{d}\left[  d\right]
_{i}^{n_{i}(g)}}{2^{\frac{1}{2}\sum_{i=2}^{d}in_{i}(g)}\left[  \frac{nd}%
{2}-2d^{2}\right]  _{\frac{1}{2}\sum_{i=2}^{d}in_{i}(g)}}.
\label{comb2}
\end{equation}
The number of subgraphs of $\mathcal{K}_{n}$ with given $n_{i}(g)$ is
\begin{align}
&  \frac{n!}{(n-\sum_{i=2}^{d}n_{i}(g))!\prod_{i=2}^{d}n_{i}(g)!}%
\nonumber\\
&  \times\frac{(\sum_{i=2}^{d}in_{i}(g))!}{(\frac{1}{2}\sum_{i=2}^{d}%
in_{i}(g))!2^{\frac{1}{2}\sum_{i=2}^{d}in_{i}(g)}\prod_{i=2}^{d}%
(i!)^{n_{i}(g)}}.
\label{combi}
\end{align}
Replacing \eqref{comb2} in \eqref{prob}, using \eqref{combi}, setting $x_{i}=\frac{n_{i}}{n}$, and 
performing an asymptotic calculation for $n$ large, we show (here $\vec x\equiv (x_2,...,x_d)$ and 
$\Delta \equiv \{\vec x\vert \frac{1}{2}\leq\sum_{i=2}^{d}x_{i}\leq1\}$)
\begin{align}
& \mathbb{P}\left[  \sum_{g\subset\Gamma}^{\left\vert g\right\vert
>n/2}|K\left(  g\right)  |\geq\delta\right] \leq\frac{1}{\delta}\int_\Delta%
d^d\vec x\,g_{n}(\vec x)\exp\bigl(n\{f_{n}(\vec x)\nonumber\\
&  +x_{d}\ln(  1-\alpha_{d}\frac{d}{2}h^{2})  +\sum_{i=2}%
^{d-1}x_{i}\ln(\alpha_{i}h^{d-i})  \}\bigr)
\label{exponent}
\end{align}
The large $n$ behavior of the integral asymptotic is controlled by $f_n(\vec x)$, and 
$g_n(\vec x)$ gives sub-dominant contributions that 
do not concern us here. We have
\begin{align}
&f_n(\vec x)   
=
\frac{1}{2}(  \sum_{i=2}^{d}ix_{i})  \ln\frac{1}{2}(  \sum_{i=2}^{d}
ix_{i})  
-  \sum_{i=2}^{d}x_{i}\ln\frac{ x_{i}}{\binom{r}{i}}
\nonumber \\ &
( 1-  \sum_{i=2}^{d}x_{i} )  \ln(
1-  \sum_{i=2}^{d}x_{i} ) 
- (  \frac{r}{2}-\frac{2r^{2}}{n})  \ln(  \frac{r}{2}%
-\frac{2r^{2}}{n})
\nonumber \\ &
+(  \frac{r}{2}- \frac{1}{2} \sum_{i=2}^{d}ix_{i}  -\frac{2r^{2}}
{n})
-
\ln(  \frac{r}{2}-\frac{1}{2}  \sum_{i=2}^{d}ix_{i}
-\frac{2r^{2}}{n})
\label{ass2}
\end{align}
\begin{figure}[h]%
\centering
\includegraphics[
height=2.00in,
width=3.00in
]
{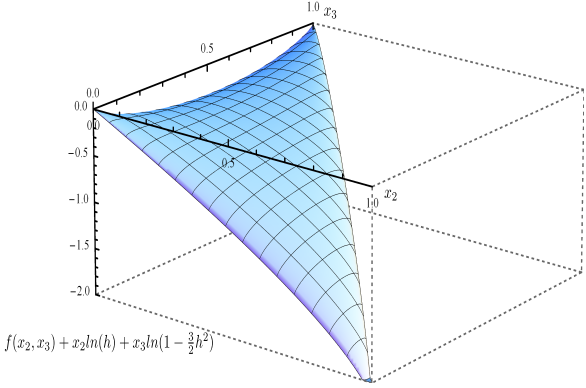}%
\caption{The exponent in \eqref{exponent} for $d=3$, for $h$ small enough, is strictly negative 
in the domain $\Delta$. Its maximum at $x_2=0, x_3=1$ is 
$O(h^2)$.}
\label{function}
\end{figure}
For $h$ small enough, in the domain $\Delta$, 
the exponent in \eqref{exponent} is strictly negative and attains its
maximum at the corner point $x_{2}=\cdots
=x_{d-1}=0$, $x_{d}=1$. At this point it is equal to 
$\ln(1-\alpha_d\frac{d}{2}h^2)$ which allows to conclude \eqref{boundonprob}.
\end{proof}

We are now in a position to prove proposition \ref{prop2}.

\begin{proof}[\it Proof of proposition \ref{prop2}]
In view of \eqref{split}, we must show that for
 $h$ small enough,
\begin{equation}
\frac{1}{n}\mathbb{E}\biggl\vert \ln\biggl\{ 1+\sum
_{\gamma\subset\Gamma}^{|\gamma|\geq n/2}K(\gamma)\frac{Z_{p}(\vec{\eta}
\mid\gamma)}{Z_{p}(\vec{\eta})}\biggr\}  \biggr\vert = o_n(1).
\label{preexp}
\end{equation}
Call $I_{\zeta}$ the event
\[
\sum_{\gamma\subset\Gamma}^{|\gamma|\geq n/2}|K(\gamma)|\frac{Z_{p}(\vec{\eta
}\mid\gamma)}{Z_{p}(\vec{\eta})}<\zeta
\]
where $\zeta$ is a positive constant that will be adjusted later on. We split
the expectation in two terms $A+B$ by conditioning over
$I_{\zeta}$ and its complement $I_{\zeta}^{c}$, and estimate each contribution.
For the first contribution, using $|\ln(1+x)|\leq|\ln(1-|x|)|$ for $|x|<1$,
$
A\leq\frac{1}{n}|\ln(1-\zeta)|\mathbb{P}[I_{\zeta}]\leq\frac{1}{n}%
|\ln(1-\zeta)|
$.
For the second contribution we have to estimate $\mathbb{P}[I_{\zeta}^{c}]$.
The events \{$\forall\gamma\subset\Gamma:e^{-2n\epsilon}\leq\frac{Z_{p}%
(\vec{\eta}\mid\gamma}{Z_{p}(\vec{\eta})}\leq e^{2n\epsilon}$\} and
\{$\sum_{\gamma\subset\Gamma}^{|\gamma|>n/2}|K(\gamma)|\leq\delta$\} imply
$I_{\delta e^{2n\epsilon}}$. Therefore $I_{\delta e^{2n\epsilon}}^{c}$ implies
the \textit{union of the complementary} events, so that applying the union
bound together with lemmas \ref{lemmaratio} and \ref{lemmalarge},
\begin{equation}\nonumber
\mathbb{P}[I_{\delta e^{2n\epsilon}}^{c}]\leq\frac{C}{\delta}e^{-n\alpha
_{d}\frac{d}{2}h^{2}}+\frac{1}{\epsilon}o_n(1).
\end{equation}
Now suppose for a moment that there exist a positive constant independent of
$n$ such that
\begin{equation}
\frac{1}{n}\biggl\vert\ln\biggl(1+\sum_{\gamma\subset\Gamma}^{|\gamma|\geq
n/2}K(\gamma)\frac{Z_{p}(\vec{\eta}\mid\gamma)}{Z_{p}(\vec{\eta}%
)}\biggr)\biggr\vert\leq C^{\prime\prime} 
\label{unif}%
\end{equation}
Then
$
B\leq C^{\prime\prime}\mathbb{P}[I_{\zeta}^{c}] 
$.
Setting $\zeta=\delta
e^{2n\epsilon}$, the above arguments imply
\begin{align}
&  \mathbb{E}\biggl[\frac{1}{n}\biggl\vert\ln\biggl(1+\sum
_{\gamma\subset\Gamma}^{|\gamma|\geq n/2}K(\gamma)\frac{Z_{p}(\vec{\eta}
\mid\gamma)}{Z_{p}(\vec{\eta})}\biggr)\biggr\vert\biggr]\nonumber\label{final}
\equiv A+B
\nonumber\\
&  \leq\frac{1}{n}|\ln(1-\delta e^{2n\epsilon})+\frac{C}{\delta}
e^{-n\alpha_{d}\frac{d}{2}h^{2}}+\frac{1}{\epsilon}o_n(1).
\nonumber
\end{align}
We are free to choose $\delta=e^{-n\alpha_{d}\frac
{d}{4}h^{2}}$ and $\epsilon=\alpha_{d}\frac{d}{16}h^{2}$ (lemmas
\ref{lemmaratio}
\ref{lemmalarge} hold) and this choice $A+B=o_n(1)$,
which proves \eqref{preexp}.

It remains to justify \eqref{unif}. From the convergence of the polymer
expansion we deduce that $\frac{1}{n}\ln Z_{p}(\vec{\eta}_{n}^{\ast})$ is
bounded uniformly in $n$. From \eqref{proba}, \eqref{partit2} we easily
show that $\frac{1}{n}\ln Z_{\Gamma}\leq\ln2+\frac{d}{2}h$. In the
high noise regime the BP messages are bounded so that from
\eqref{bethefunc} we deduce that $\frac{1}{n}\ln Z_{\mathrm{Bethe}}(\vec{\eta
}_{n}^{\ast})$ is bounded by a constant independent of $n$. 
Finally the
triangle inequality implies that
$
\frac{1}{n}|\ln Z_{\Gamma}-\ln Z_{\mathrm{Bethe}}(\vec{\eta}%
_{n}^{\ast})-\ln Z_{p}(\vec{\eta}_{n}^{\ast})|
$
is bounded uniformly in $n$. This is precisely the statement \eqref{unif}.
\end{proof}

\section{Conclusion}

The approach is quite general and can hopefully be generalized to standard 
irregular LDPC codes with bounded degrees and 
binary-input memoryless output-symmetric channels
with bounded log-likelihood variables. This will be the subject of future work.

\section*{Acknowledgment}

M.V acknowledges supported by the Swiss National Foundation for Scientific
Research, Grant no 2000-121903. N.M benefited from discussions with M.
Chertkov and R. Urbanke.


\end{document}